\documentclass{amsart}
\usepackage{graphicx} 
 \usepackage{subcaption}
\usepackage{xcolor}
\usepackage{tikz, pgfplots}
\usetikzlibrary{decorations.fractals}
\usetikzlibrary{intersections}
\usepgfplotslibrary{fillbetween} 
\usetikzlibrary{lindenmayersystems}
\usepackage{amsmath, amssymb}
\newtheorem{lemma}{Lemma}
\newtheorem{theorem}{Theorem}
\newtheorem{observation}{Observation}

\newcommand{\V}{\mathcal{V}}
\newcommand{\Sw}{\mathcal{S}}

\title{A greedy maximal sweepline algorithm for a Jordan curve}
\author[Apurva Mudgal]{Apurva Mudgal, \\ Department of Computer Science and Engineering, \\ Indian Institute of Technology Ropar, \\ Rupnagar, Punjab - 140001, \\ Email: {\tt apurva@iitrpr.ac.in}}

\begin{document}

\maketitle
\begin{abstract}
We give a greedy sweepline algorithm for a Jordan curve and prove that it is maximal in the sense of \cite{apurva}. Our proof uses K\H{o}nig's lemma.
\end{abstract}

\tableofcontents

Jordan curve theorem is a fundamental theorem in topology, with several known proofs \cite{hales, compendium1, compendium2}. In \cite{apurva}, the author gave a proof of Jordan curve theorem based on the sweepline algorithm for the trapezoidal decomposition of a polygon (see the book \cite{overmars}). The proof uses Zorn's lemma to prove the existence of a maximal sweepline algorithm. It does not specify a maximal sweepline algorithm for a Jordan curve.

In this short note, we remedy this drawback. We describe a greedy sweepline algorithm for a Jordan curve, and prove that it is maximal. Instead of Zorn's lemma, the proof now uses K\H{o}nig's lemma \cite{diestel}.

 The greedy algorithm always extends the current region swept by picking a vertical segment $u_{max}$ of maximum length, among the currently available vertical segments. The horizontal segment for the next horizontal sweep starts at the midpoint of $u_{max}$ and is extended as far as possible in the exterior till it hits a point of the given Jordan curve $J$. This process is repeated again and again.

In the following, we assume familiarity with the paper \cite{apurva}.

\section{Maximum length vertical segments in finite trees of horizontal sweeps}

\begin{observation}
\label{obs-max-vertical-seg-horizontal-sweep}
Let $H(t)$ be a horizontal sweep and let $K_t$ be its boundary Jordan curve (see Lemma $2$ in \cite{apurva}). Let $\V(K_t)$ denote the set of vertical segments in $K_t$. Suppose $\V(K_t) \neq \phi$.
Then, there exists a vertical segment $u_{max} \in \V(K_t)$ such that the length of $u_{max}$ is greater than or equal to the length of any vertical segment in $\V(K_t)$.
\end{observation}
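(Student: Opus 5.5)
The plan is to show that every horizontal sweep has only finitely many vertical segments in its boundary, so that a longest one exists. Let $\V(K_t)=\{v_1,\dots,v_n\}$ with $n\ge 1$. The maximum of the finitely many numbers $|v_1|,\dots,|v_n|$ is then attained by some index $i$, and we set $u_{max}=v_i$. Ties are harmless; we simply pick any maximizer.

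For the finiteness, recall the structure of a horizontal sweep $H(t)$ from \cite{apurva}. Its boundary curve $K_t$ (Lemma $2$ in \cite{apurva}) consists of three kinds of pieces. There are finitely many arcs of $J$. There are finitely many horizontal segments swept. There are vertical segments, each created at a stage of the sweep, namely where a horizontal segment in the exterior hits $J$ or a previously generated boundary piece. Each step of the sweep construction adds only a bounded number of new vertical segments (at most two, at the ends of the swept band) and removes the one it extends from. A horizontal sweep is built from finitely many such steps, so by induction on the number of steps $\V(K_t)$ is finite.

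If instead one works with the definition in \cite{apurva} where $K_t$ is given directly as a finite concatenation of arcs of $J$ and straight segments, the argument is even simpler. $K_t$ is a Jordan curve formed from finitely many pieces, and each vertical segment of $\V(K_t)$ is one of those pieces, or a maximal vertical sub-piece of one, so there are finitely many.

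The main obstacle is confirming that the sweep in \cite{apurva} really involves only finitely many steps, i.e.\ that $H(t)$ is not a limit object with infinitely many vertical segments accumulating. If it were such a limit, I would fall back on compactness instead. All vertical segments lie in a bounded region, so their lengths are bounded above by the diameter of $K_t$. I would take a sequence $v_k\in\V(K_t)$ with $|v_k|\to\sup$. Passing to a convergent subsequence of endpoints in the compact set $K_t$ gives a limit vertical segment of length $\sup$. This segment lies in $K_t$ because $K_t$ is closed. The remaining point would be to argue that it is in fact a vertical segment of $K_t$ in the sense of the paper, which would need the structure of $\V(K_t)$. I expect the finite-step description to be the one intended, making the observation immediate.
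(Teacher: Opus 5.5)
Your primary argument assumes that $\V(K_t)$ is finite. You justify this with a finite-step picture of $H(t)$: finitely many arcs of $J$, and at most two new vertical segments per step. That is the polygonal (trapezoidal decomposition) picture, and it does not hold in this setting. For an arbitrary Jordan curve $J$, $H(t)$ is the union of the maximal open vertical segments $s_p$ over all $p$ in $t$, formed all at once. Every point of $t$ where the length of $s_p$ jumps (for example, just past the tip of an overhang of $J$) contributes a vertical segment to $K_t$. If $J$ has infinitely many shrinking overhangs over $t$ accumulating at a point, then $\V(K_t)$ is countably infinite. The paper's proof explicitly allows for this case. With infinitely many segments, ``take the largest of $|v_1|,\dots,|v_n|$'' is unavailable, and the supremum of the lengths is not a priori attained. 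So the first two paragraphs of your proposal do not prove the statement.

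Your compactness fallback is the right instinct, but it stops exactly at the step that matters, and it aims at the wrong target. You do not need to show that the limit segment belongs to $\V(K_t)$. You need to show that such a limit cannot come from infinitely many distinct members. The missing idea is that distinct vertical segments of $K_t$ are mutually disjoint arcs of the single Jordan curve $K_t$. A Jordan curve cannot contain infinitely many mutually disjoint arcs all of diameter at least $\delta>0$. Their preimages under a parametrization $S^1\to K_t$ are disjoint subarcs of $S^1$, whose lengths must tend to $0$, so by uniform continuity their diameters tend to $0$. This is what the paper invokes as Observation 4 of \cite{apurva}. The paper then fixes any $u'\in\V(K_t)$ and puts $\delta=|u'|>0$. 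The set $W_\delta$ of vertical segments of length at least $\delta$ contains $u'$ and is finite, so it has a longest element, and that element is longest in all of $\V(K_t)$. In your fallback's terms: a maximizing sequence $v_k$ with $|v_k|\to\sup\ge\delta$ takes only finitely many distinct values, so some $v_k$ has length exactly equal to the supremum.
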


\noindent {\bf Proof:} Suppose $\V(K_t)$ is non-empty. Let $u'$ be {\it any} vertical segment in $\V(K_t)$. Let $\delta$ be the length of $u'$. Note that $\delta > 0$.

Let $W_{\delta}$ be the set of all vertical segments in $\V(K_t)$ with length greater than or equal to $\delta$. Since $u' \in W_{\delta}$, $W_{\delta} \neq \phi$.

We prove that $W_{\delta}$ is a finite set. Suppose, for the sake of contradiction, that $W_{\delta}$ is countably infinite. Then, there exists a convergent sequence $u_1, u_2, \ldots$ of vertical segments in $W_{\delta}$. Since each $u_i$, $i \in \mathbb{N}$, has length at least $\delta$, the limiting vertical segment $u^*$ of this sequence also has length at least $\delta$. Since $u_1, u_2, \ldots$ are mutually disjoint arcs of the Jordan curve $K_t$, we arrive at a contradiction (take $\gamma_i = u_i$, $i \in \mathbb{N}$, in Observation $4$ of \cite{apurva}). 

We conclude the proof by taking $u_{max}$ as the maximum length vertical segment in the finite set $W_{\delta}$.
$\blacksquare$

\begin{observation}
\label{obs-max-vertical-seg-finite-tree}
Let $\mathcal{S} = (H(t_1), H(t_2), \ldots, H(t_k))$ be a sweepline algorithm, consisting of a finite number of horizontal sweeps. Let $K$ be the piecewise-vertical Jordan curve forming its boundary (see Lemma $5$ in \cite{apurva}).
Let $\V(K)$ denote the set of vertical segments in $K$. Suppose $\V(K) \neq \phi$. Then, there exists a vertical segment $u_{max} \in \V(K)$ such that the length of $u_{max}$ is greater than or equal to the length of any
vertical segment in $\V(K)$.
\end{observation}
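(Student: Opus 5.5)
The argument for Observation~\ref{obs-max-vertical-seg-horizontal-sweep} uses only two facts: $K_t$ is a Jordan curve, and its vertical segments are mutually disjoint arcs of it. The boundary $K$ of a finite sweepline algorithm $\mathcal{S}$ is again a Jordan curve by Lemma $5$ of \cite{apurva}, and its vertical segments are disjoint arcs of $K$. So I will repeat the same argument with $K$ in place of $K_t$. A second route, reducing to the single sweeps through the finite union $\V(K) \subseteq \bigcup_{i=1}^k \V(K_{t_i})$, is noted at the end.

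First, since $\V(K) \neq \phi$, I fix any $u' \in \V(K)$ and let $\delta > 0$ be its length. Next, I define $W_\delta = \{u \in \V(K) : |u| \geq \delta\}$; it is nonempty because $u' \in W_\delta$. The main step is to show that $W_\delta$ is finite. Suppose it is infinite. Each segment is determined by its endpoints, and these all lie in the compact set $K$, so Bolzano--Weierstrass gives a subsequence $u_1, u_2, \ldots$ of distinct segments whose endpoints converge. The limit $u^*$ is a vertical segment of length at least $\delta$, hence nondegenerate. The $u_i$ are mutually disjoint arcs of the Jordan curve $K$ with diameters bounded below by $\delta$. This contradicts Observation $4$ of \cite{apurva} with $\gamma_i = u_i$, which rules out infinitely many disjoint subarcs of a Jordan curve with diameters bounded away from zero. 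Finally, $W_\delta$ is a finite nonempty set, so I take $u_{max}$ to be a longest element of it. Any segment of $\V(K)$ outside $W_\delta$ is shorter than $\delta \le |u_{max}|$, so $u_{max}$ is maximal over all of $\V(K)$.

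The main obstacle is checking that the elements of $\V(K)$ really are pairwise disjoint arcs of $K$. Two collinear vertical pieces of $K$ could touch at an endpoint; I would count such touching pieces as a single maximal vertical segment, as I expect \cite{apurva} does. If that convention were unavailable, I would use the alternative route. Each vertical segment of $K$ lies on some $K_{t_i}$, contained in or equal to one of its vertical segments, by the construction in Lemma $5$. Observation~\ref{obs-max-vertical-seg-horizontal-sweep}'s finiteness argument then gives that only finitely many segments of each $K_{t_i}$ have length at least $\delta$. A segment of $K$ of length $\ge\delta$ sits inside one of these, and each of these contains only finitely many such segments of $K$ because they are disjoint with length $\ge \delta$. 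So $W_\delta$ is again finite.
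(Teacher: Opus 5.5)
Your main argument is the paper's: the paper proves this observation only by saying it goes ``on the same lines as'' Observation~\ref{obs-max-vertical-seg-horizontal-sweep}, and you do exactly that, replacing $K_t$ by the Jordan curve $K$ of Lemma~5 of \cite{apurva} and rerunning the finiteness argument for $W_\delta$ via Observation~4 of \cite{apurva}. The fallback route through $\V(K) \subseteq \bigcup_{i} \V(K_{t_i})$ is unnecessary, and your endpoint-touching worry is harmless, since the vertical segments are open segments and distinct ones stay disjoint even when they share an endpoint.
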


\noindent {\bf Proof:} On the same lines as the proof of Observation \ref{obs-max-vertical-seg-horizontal-sweep}. $\blacksquare$
\section{The greedy sweepline algorithm $\Sw_{greedy}$} 

\subsection{Algorithm $\Sw_{greedy}$.} We now describe the greedy sweepline algorithm $\Sw_{greedy}$:

\begin{enumerate}
\item As in \cite{apurva}, we assume that we are given a point $p^* \in \mathbb{R}^2 - J$ such that its equivalence class $[s_{p^*}]$ has
no open segments of infinite length. 

\item We choose the initial horizontal segment $t_1$ such that (i) $p^* \in int(t_1)$, (ii) $int(t_1) \cap J = \phi$ and (iii) both endpoints of $t_1$ belong to $J$. 
Let $E_1 = H(t_1)$ be the initial horizontal sweep and let $K_1$ be the piecewise-vertical Jordan curve forming its boundary.

\item For $i= 1, 2, \ldots$:

	\begin{enumerate}
		\item If $\V(K_i) = \phi$, halt.
		
		\item Let $u_i$ be a vertical segment in $\V(K_i)$ such that length of $u_i$ is greater than or equal to the length of any vertical
		segment in $\V(K_i)$. 
	
		\item Let $m_i$ be the {\bf midpoint} of vertical segment $u_i$.
		
		\item Let $t_{i+1}$ be a horizontal segment such that (i) one endpoint of $t_{i+1}$ is at $m_i$, (ii) the other endpoint of $t_{i+1}$ is on the 
	          Jordan curve $J$, (iii) $int(t_{i+1}) \cap J = \phi$, and (iv) $t_{i+1} - \{ m_i \} \subset ext(K_i)$. 
		
		\item $K_i$ is extended by $H(t_{i+1})$. Let $K_{i+1}$ be the piecewise-vertical Jordan curve bounding $E_{i+1} = (H(t_1), H(t_2), \ldots, H(t_{i+1}))$.
	\end{enumerate}
\end{enumerate}

\subsection{Algorithm $\Sw_{greedy}$ is well-defined}
{\it $t_1$ exists.} Let $t_1$ be the union of all horizontal segments $t'$ such that (i) $p^* \in int(t')$ and (ii) $int(t') \cap J = \phi$. Suppose, for the sake of contradiction, that $t_{1}$ is infinite. Let $q_{1}$ be a point of $int(t_{1})$ that lies outside the bounding box $\mathcal{B}$ of $J$. Then, $s_{q_1} \in [s_{p^*}]$
and $s_{q_1}$ is infinite (a contradiction).

{\it $u_i$ exists, for each $i \in \mathbb{N}$.} For each $i \in \mathbb{N}$, the vertical segment $u_i$ exists by Observation \ref{obs-max-vertical-seg-finite-tree}.

{\it $t_{i+1}$ exists, for each $i \in \mathbb{N}$.} For each $i \in \mathbb{N}$, let $t_{i+1}$ be the union of all horizontal segments $t'$ such that (i) one endpoint of $t'$ is at $m_i$, (ii) $int(t') \cap J = \phi$, and (iii) $t' - \{ m_i \} \subset ext(K_i)$. Suppose, for the sake of contradiction, that $t_{i+1}$ is infinite. Let $q_{i+1}$ be a point of $int(t_{i+1})$ that lies outside the bounding box $\mathcal{B}$ of $J$. Then, $s_{q_{i+1}} \in [s_{p^*}]$
and $s_{q_{i+1}}$ is infinite (a contradiction).

\section{Proof of maximality}

\begin{lemma}
\label{lem-no-inf-ntm}
The recursion tree $T(\Sw_{greedy})$ has no non-terminating {\bf infinite} ray $r$, such that $r$ starts at the root node of $T(\Sw_{greedy})$.
\end{lemma}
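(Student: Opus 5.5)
Suppose, for contradiction, that $T(\Sw_{greedy})$ has an infinite ray $r = (H(t_{j_1}), H(t_{j_2}), \ldots)$ starting at the root $H(t_{j_1}) = H(t_1)$, on which no node terminates. Each node $H(t_{j_{k+1}})$ is attached to its parent $H(t_{j_k})$ along a vertical segment $u_{i_k}$ of $K_{i_k}$. That segment was the one chosen greedily at step $i_k$, and $t_{j_{k+1}}$ starts at its midpoint $m_{i_k}$. The plan is to show that the lengths $\ell_k = |u_{i_k}|$ stay bounded away from zero along the ray, and then contradict Observation $4$ of \cite{apurva}, exactly as in the proof of Observation \ref{obs-max-vertical-seg-horizontal-sweep}.

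\textbf{Step 1: the greedy choice gives a uniform lower bound.} Fix $k$. At step $i_k$ the segment $u_{i_k}$ was available, so for every later step $i > i_k$ either $u_{i_k}$ has already been used or it is still in $\V(K_i)$. In the second case the greedy choice forces $|u_i| \geq |u_{i_k}|$. I intend to use this to show that only finitely many steps can pick segments of length $\geq \delta$, for each fixed $\delta > 0$. The argument is that infinitely many such segments would be mutually disjoint vertical arcs lying in the limiting boundary, or in the union of the curves $K_i$ restricted to a bounded box. Taking a convergent subsequence and applying Observation $4$ of \cite{apurva} then gives a contradiction. Consequently the lengths $|u_i|$ chosen over the whole run tend to $0$.

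\textbf{Step 2: the child segment is not much shorter than the parent segment.} A horizontal sweep started at the midpoint $m_{i_k}$ of $u_{i_k}$ sweeps a region whose vertical extent at the left end contains a neighbourhood of $m_{i_k}$ inside $u_{i_k}$. Since $u_{i_k}$ is not blocked there by $J$, the sweep covers all of $u_{i_k}$ until it hits $J$. I would show that the new piecewise-vertical boundary of $H(t_{j_{k+1}})$ contains vertical segments whose total length reflects the sweep's height. On a non-terminating ray, some child vertical segment must continue the ray, and I want to bound its length from below in terms of geometry that does not shrink.

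The main obstacle is that Step 2 does not obviously give a lower bound: the segments along the ray may genuinely shrink. So I would instead switch to a compactness argument. Every point in the region swept along $r$ lies in the bounded region $[s_{p^*}]$, which is contained in $\mathcal{B}$. Along the ray, the midpoints $m_{i_k}$ have an accumulation point $m^*$. Because $|u_{i_k}| \to 0$ by Step 1, the segments $u_{i_k}$ converge to $m^*$.

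I would then argue that $m^*$ lies either in $ext(J)$-side region $[s_{p^*}]$ or on $J$.

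- If $m^*$ is not on $J$, a small open box around $m^*$ misses $J$. The greedy algorithm would have swept a horizontal segment of definite length through that box, producing a vertical segment of length bounded below near $m^*$. This contradicts $|u_{i_k}| \to 0$ together with the greedy choice.
- If $m^* \in J$, the non-termination of the ray says every node has a child. I would compare this with the definition of a non-terminating ray in \cite{apurva} to derive a contradiction, using that the nested sweeps approach a single point of $J$ and so must eventually terminate.

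I expect this last case to be the hardest, and would try first to reduce it to the corresponding lemma in \cite{apurva} on rays whose vertical segments shrink to a point.
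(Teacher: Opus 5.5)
Step 1 does not hold up, and everything after it depends on it. Observation 4 of \cite{apurva} is a statement about mutually disjoint arcs of \emph{one} Jordan curve. That is why it works in Observation~\ref{obs-max-vertical-seg-horizontal-sweep}, where all the segments lie on $K_t$. The segments $u_i$ picked during the run come from different curves $K_i$. Each is absorbed into the swept region once $H(t_{i+1})$ is attached along it, so no single Jordan curve (neither $K_{greedy}$ nor any $K_i$) contains them all. Infinitely many disjoint vertical open segments of length $\geq\delta$ inside $\mathcal{B}$ are no contradiction by themselves; think of the vertical chords of a rectangle. Controlling how many long segments the run picks is essentially the content of the paper's Theorem, whose proof needs the present lemma. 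So Step 1 assumes something at least as strong as what you want to prove. Indeed, if Step 1 were available the lemma would be immediate: by Lemma 7 of \cite{apurva} the segments along $r$ converge to the unique limiting segment $s^*_r$, so $|u_{i_k}|\to 0$ would force $|s^*_r|=0$.

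That last point also shows you have misread the hypothesis. ``Non-terminating'' does not mean that every node has a child, which is automatic for an infinite ray; it means that $|s^*_r|>0$. Infinite rays whose segments shrink to a point of $J$ are allowed and do occur. For the triangle with vertices $(0,0),(1,0),(1,1)$ and $p^*=(\frac34,\frac12)$, $\Sw_{greedy}$ produces an infinite chain of sweeps with windows $\{2^{-k}\}\times(0,2^{-k})$ shrinking to the corner. So your hardest case $m^*\in J$ cannot be refuted; it simply is not a non-terminating ray. Your opening plan is also backwards: lengths along $r$ bounded away from $0$ is what non-termination gives (compare the last step of the Theorem's proof), not a contradiction.

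The missing idea is a local argument at the midpoint $m$ of $s^*_r$.
\begin{enumerate}
\item As in Lemma 8 of \cite{apurva}, there is a horizontal segment $t''$ with endpoint $m$ such that every $s_p$, $p\in cl(t'')$, has length at most $\frac{7}{6}|s^*_r|$ and reaches at least $\frac{1}{3}|s^*_r|$ above and below $h_m$.
\item Far enough along $r$, every open segment meets $cl(t'')$. So the chosen segment $u_{i_{N+1}-1}$ is some $s_{p'}$ with $p'\in cl(t'')$, and its midpoint lies within $\frac{1}{4}|s^*_r|$ of $h_m$.
\item The horizontal corridor at that height over $t''$ misses $J$, and $\Sw_{greedy}$ extends $t_{i_{N+1}}$ until it hits $J$. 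Hence $t_{i_{N+1}}$ reaches $s^*_r$, and $r$ cannot keep converging to it.
\end{enumerate}
This uses only the midpoint rule and the maximal horizontal extension. The maximum-length choice, on which your plan is built, is not needed here and enters only in the Theorem.
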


\noindent {\bf Proof:} Suppose, for the sake of contradiction, that there exists a non-terminating infinite ray $r=(H(t_{i_1}), H(t_{i_2}), \ldots)$ in $T(\Sw_{greedy})$, such
that $i_1=1$ and $i_1 < i_2 < \cdots$. 

Let $s^*_r$ be the unique limiting segment of $r$ (see Lemma $7$ of \cite{apurva}). Since $r$ is a non-terminating infinite ray, the length $|s^*_r|$ of $s^*_r$ is positive.
Let $m$ be the midpoint of $s^*_r$. 
Construct a horizontal line segment $t_{s^*_r}$ of positive length, as in the proof of Lemma $8$ of \cite{apurva}. One endpoint of $t_{s^*_r}$ is at $m$.
Then, there exists a natural number $N'$ and an open segment $s_{i_{N'}} \in H(t_{i_{N'}})$ in $E_{N'} =(H(t_1), H(t_2), \ldots, H(t_{N'}))$ such that the finite ray $r'$ obtained by replacing the portion of ray $r$ generated after open segment $s_{i_{N'}}$ by the horizontal sweep $H(t')$, where $t'$ is the portion of $t_{s^*_r}$ between $s_{i_{N'}}$ and $s^*_r$, is equivalent to $r$ i.e., $r \sim r'$ (see proof of Lemma $8$ of \cite{apurva}).

Since $s^*_r$ is a limiting segment of ray $r$, there exists a horizontal segment $t''$ such that (i) $t'' \subset t'$, (ii) one endpoint of $t''$ is the same as the midpoint $m$ of $s^*_r$, (iii) $int(t'') \neq \phi$, (iv) for every point $p \in cl(t'')$, the open segment $s_p$ has length at most $|s^*_r| + \frac{|s^*_r|}{6}$, and (v) for every point $p \in cl(t'')$, the open segment $s_p$ has length at least $\frac{|s^*_r|}{3}$ on either side of the horizontal line $h_m$, passing through $m$.

Let $N$ be a natural number such that $N > N'$ and for every natural number $j > N$, every open segment in $H(t_{i_j})$ intersects $cl(t'')$. Let $p' \in cl(t'')$ be the point such that 
$u_{i_{N+1}-1} = s_{p'}$. By the properties of $t''$, the midpoint $m_{i_{N+1}-1}$ of vertical segment $u_{i_{N+1}-1}$ lies within distance at most $\frac{|s^*_r|}{4}$ from $t''$.

Then, the horizontal segment $t_{i_{N+1}}$ constructed by $\Sw_{greedy}$ will extend at least from $u_{i_{N+1}-1}$ to $s^*_r$. This contradicts our assumption that $r$ is an infinite ray of $T(\Sw_{greedy})$.
$\blacksquare$

\begin{theorem}
$\Sw_{greedy}$ is a maximal sweepline algorithm.
\end{theorem}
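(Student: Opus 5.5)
We reduce maximality to the absence of non-terminating rays in the recursion tree, using Lemma \ref{lem-no-inf-ntm} for infinite rays and K\H{o}nig's lemma to pass between infinite trees and infinite rays. Recall from \cite{apurva} that a sweepline algorithm is maximal when its recursion tree has no non-terminating ray: every ray from the root either ends in a horizontal sweep whose boundary has no vertical segments left to extend, or converges to a limiting segment of length zero. The plan is to establish this for $T(\Sw_{greedy})$.

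\emph{First case: $\Sw_{greedy}$ halts.} If it halts at some step $i$, then $\V(K_i)=\phi$. The boundary $K_i$ then contains no vertical segment through which the swept region could be extended, so every finite ray of $T(\Sw_{greedy})$ is terminating. We check this against the definition of maximality in \cite{apurva}, in the version of Lemma $5$ there for finite families of horizontal sweeps; it is essentially immediate.

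\emph{Second case: $\Sw_{greedy}$ runs forever.} Then $T(\Sw_{greedy})$ is an infinite rooted tree. Each node $H(t_i)$ has finitely many children. This holds because each child corresponds to a distinct vertical segment of $K_i$ chosen later, and by the finiteness argument of Observation \ref{obs-max-vertical-seg-horizontal-sweep}, for any $\delta>0$ only finitely many boundary vertical segments of $H(t_i)$ have length $\ge\delta$. If local finiteness fails outright, we argue directly instead: consider the subtree of nodes lying on non-terminating rays. Suppose some finite ray were non-terminating, that is, it ended at a node whose boundary still contains a vertical segment $u$ of positive length never selected. Since $\Sw_{greedy}$ always selects a segment of maximum length, and $u$ remains available forever, infinitely many selected segments $u_j$ have length $\ge |u|$. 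These are disjoint arcs of the limiting boundary. By the argument of Observation \ref{obs-max-vertical-seg-horizontal-sweep} (Observation $4$ of \cite{apurva}), this is impossible. Hence every vertical segment is eventually chosen, and the lengths $|u_i|\to 0$.

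\emph{Main step and K\H{o}nig's lemma.} Suppose $\Sw_{greedy}$ were not maximal. Then some extension of it, as in the maximality definition of \cite{apurva}, produces a non-terminating ray that is inequivalent to all rays of $T(\Sw_{greedy})$. We would reduce this to the existence of infinitely many nodes of $T(\Sw_{greedy})$ lying above a fixed vertical segment of positive length. Applying K\H{o}nig's lemma to the locally finite subtree of such nodes gives an infinite ray $r$ starting at the root whose limiting segment $s^*_r$ has positive length, i.e.\ a non-terminating infinite ray. This contradicts Lemma \ref{lem-no-inf-ntm}.

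\emph{Expected obstacle.} I expect the main difficulty to be the K\H{o}nig step: showing that the relevant subtree is locally finite, and that a failure of maximality really forces infinitely many nodes into it. For local finiteness I would try the length-threshold finiteness of Observation \ref{obs-max-vertical-seg-finite-tree} first.
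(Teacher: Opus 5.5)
There is a genuine gap exactly where the contradiction has to be produced. In your ``direct'' argument you correctly note that an unselected $u\in\V(K_N)$ of length $\delta>0$ forces, by the greedy rule, that $\Sw_{greedy}$ runs forever and that every later choice $u_{N+1},u_{N+2},\ldots$ has length at least $\delta$. You then treat these as disjoint arcs of the limiting boundary and invoke Observation $4$ of \cite{apurva}, but they are not such arcs. Each chosen segment is swept by the next horizontal sweep ($u_i\in H(t_{i+1})$), so it is not a vertical segment of $K_{greedy}$. Moreover, the $u_j$ are segments of the varying curves $K_j$, not of one fixed Jordan curve. Infinitely many pairwise disjoint vertical open segments of length $\ge\delta$ in $\mathbb{R}^2-J$ are not contradictory in themselves; one horizontal sweep can already contain uncountably many. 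Observation $4$ of \cite{apurva} applies only to the chosen segments lying on the boundary of a single sweep $H(t_k)$, and there it gives finiteness of the set of children of $H(t_k)$, not a contradiction.

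Your unconditional local-finiteness claim in the second case is also unjustified. Observation~\ref{obs-max-vertical-seg-horizontal-sweep} bounds the number of boundary segments of length $\ge\delta$ for a fixed $\delta$. Without a uniform lower bound on the lengths of the chosen segments, a node can have infinitely many children whose lengths tend to $0$.

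The missing idea is to combine these two halves, which you keep in separate paragraphs. The lower bound $\delta$ supplied by the unselected $u$ is what makes $T(\Sw_{greedy})$ locally finite, and the same bound is what makes the K\H{o}nig ray non-terminating. Concretely, show $\V(K_{greedy})=\phi$. A vertical segment of $K_{greedy}$ is either the limiting segment of an infinite ray, which Lemma~\ref{lem-no-inf-ntm} excludes, or it lies in some $\V(K_N)$. In the latter case the argument runs as follows.
\begin{enumerate}
\item All $u_j$ with $j>N$ have length $\ge\delta$, so every node has finitely many children (only finitely many choices precede stage $N$), while the tree is infinite.
\item K\H{o}nig's lemma yields an infinite ray $r=(H(t_{i_1}),H(t_{i_2}),\ldots)$ from the root.
\item The segments $u_{i_j-1}\in H(t_{i_j})$ along $r$ eventually have length $\ge\delta$. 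A convergent subsequence has a limit of length $\ge\delta$, and since $|W(r)|=1$ this limit is $s^*_r$.
\item Thus $r$ is non-terminating, contradicting Lemma~\ref{lem-no-inf-ntm}.
\end{enumerate}
Your ``main step'' is phrased via an extension producing an inequivalent ray and via ``nodes lying above a fixed vertical segment''. It never carries out this reduction, and in particular never explains why $|s^*_r|>0$ for the ray that K\H{o}nig's lemma produces.
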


\noindent {\bf Proof:} Let $K_{greedy}$ be the piecewise-vertical Jordan curve forming the boundary of the region swept by the greedy sweepline
algorithm $\Sw_{greedy}$ (see Theorem $3$ in \cite{apurva}).

We prove that the set $\V(K_{greedy})$ of vertical segments of $K_{greedy}$ is empty. Suppose, for the sake of contradiction, that $\V(K_{greedy}) \neq \phi$. Let $u$ be {\it any} vertical segment in $\V(K_{greedy})$. There are two cases: (i) $u$ is the limiting segment $s^*_r$ of an infinite ray $r$ of $T(\Sw_{greedy})$, or (ii) $u \in \V(K_N)$ for some natural number $N$.

By Lemma \ref{lem-no-inf-ntm}, case (i) cannot occur. Let us consider case (ii). Let $\delta$ be the length of vertical segment $u$.
Note that $\delta > 0$.

Since the greedy sweepline algorithm never picks the vertical segment $u$ for extension, this implies that (a) it runs forever and (b) all vertical segments $u_{N+1}, u_{N+2}, \ldots$ have length at least $\delta$.

By Observation \ref{obs-max-vertical-seg-horizontal-sweep}, each horizontal sweep $H(t_i)$, $i \in \mathbb{N}$, has a finite number of vertical segments in $bd(H(t_i))$ of length at least $\delta$. 
Thus, $T(\Sw_{greedy})$ is a locally finite tree with an infinite number of nodes. By K\H{o}nig's lemma, $T(\Sw_{greedy})$ has an infinite ray $r$, such that $r$ starts at the root node of $T(\Sw_{greedy})$.

Let $r=(H(t_{i_1}), H(t_{i_2}), \ldots$, where $i_1=1$ and $i_1 < i_2 < \cdots$. Let $M$ be a natural number such that for all $j \geq M$, $i_j > N+1$. 
For each $j > 1$, $u_{i_j-1}$ is the vertical segment used by $\Sw_{greedy}$ for extending ray $r$ by adding the horizontal sweep $H(t_{i_j})$. 

Then, $u_{i_M-1}, u_{i_{M+1}-1}, \ldots$ is an infinite sequence of open segments such that (i) each open segment in the sequence has length at least $\delta$ and
(ii) for each $j \geq M$, $u_{i_j-1} \in H(t_{i_j})$. 

Then, there exists an infinite convergent subsequence of $u_{i_M-1}, u_{i_{M+1}-1}, \ldots$. The limiting segment $s^*$ of this subsequence has length at least $\delta$. Since 
$|W(r)|=1$ (see Lemma $7$ of \cite{apurva}), we conclude that $s^*_{r}$ has length at least $\delta$. Thus, $r$ is a non-terminating infinite ray in $T(\mathcal{S}_{greedy})$. We arrive at a contradiction due to Lemma \ref{lem-no-inf-ntm} proved above.
$\blacksquare$

\end{document}